\newtheorem{definition}{Definition}
\newtheorem{lemma}[definition]{Lemma}
\newtheorem{theorem}[definition]{Theorem}
\let\boxminusnew\boxminus
\let\boxplusnew\boxplus
\newcommand*{\boxangle}{%
\mathrel{
\vcenter{\offinterlineskip
\hbox{\scalebox{.76}[0.85]{$\righthalfcup$}}\vskip-.9ex\hbox{$\boxvoid$}}}}
\newcommand{\boxanglerota}{\rotatebox[origin=c]{90}{$\boxangle$}}
\newcommand{\boxanglerotb}{\rotatebox[origin=c]{270}{$\boxangle$}}
\newcommand{\boxtriangledown}{\rotatebox[origin=c]{180}{$\boxtriangleup$}}
\newcommand{\boxtriangleright}{\rotatebox[origin=c]{270}{$\boxtriangleup$}}
\newcommand{\boxtriangleleft}{\rotatebox[origin=c]{90}{$\boxtriangleup$}}
\newcommand{\boxvertnew}{\rotatebox[origin=c]{90}{$\boxminusnew$}}
\begin{document}

\title{Finding a largest empty convex subset in space is W[1]-hard}

\author{Panos Giannopoulos\thanks{Institut f{\"u}r Informatik, Universit{\"a}t Bayreuth, 
			  \allowbreak Universi\-t{\"a}tsstra{\ss}e, 30, D-95447 Bayreuth, Germany,
			  \textsf \{christian.knauer, panos.giannopoulos\}@uni-bayreuth.de}.
\footnote{Research supported by the German Science Foundation (DFG) under grant Kn~591/3-1.}
\and
Christian Knauer\footnotemark[1]
}

\index{Panos Giannopoulos}
\index{Christian Knauer}

\maketitle

\begin{abstract}
We consider the following problem: Given a point set in space find a largest subset that is in convex position and whose convex hull is empty. We show that the (decision version of the) problem is W[1]-hard.
\end{abstract}


\section{Introduction}
\label{sec:introduction}

\noindent
{\textbf {Problem definition.}} Let $P$ be a set of $n$ points in $\mathbb{R}^3$ and $k\in \mathbb{N}$. In the \textsc{Largest-Empty-Convex-Subset} problem we want to decide whether there is a set $Q\subset P$ of $k$ points in convex position whose convex hull does not contain any other point of $P$.  

\subsection{Results}

We show that \textsc{Largest-Empty-Convex-Subset} is W[1]-hard with respect to the solution size $k$, under the extra condition that the solution set is \emph{strictly} convex, i.e., the interior of the convex hull of any of its subsets is empty. This means that (under standard complexity-theoretic assumptions) the problem is not fixed-parameter tractable with respect to $k$, i.e., it does not admit an $O(f(k)\cdot n^c)$-time algorithm for any computable function $f$ and any constant $c$. See~\cite{FG06} for basic notions of parameterized complexity theory and~\cite{GKW08}  for survey of parameterized complexity results on geometric problems.

\subsection{Related work}

\textsc{Largest-Empty-Convex-Subset} has been shown to be NP-hard in last year's EuroCG \cite{KW12}. (In that paper, NP-hardness has been shown also for the more general version where the emptiness condition is dropped.) Several interesting questions were also raised such as whether the problem is fixed-parameter tractable with respect to the solution size and whether it admits a polynomial $o((\log n)/n)$-approximation algorithm. Here, we give a negative answer to the first question.
Note that in the plane, the problem is solvable in polynomial time; see, for example, \cite{AR85},~\cite{DEO90}.

From a combinatorial point of view, there is a long history of results starting with the famous Erd\"os-Szekeres theorem \cite{ES35}, which states that for every $k$ there is a number $n_k$ such that every planar set of $n_k$ points in general position contains $k$ points in convex position. Horton \cite{Ho83} showed that this is not true when the emptiness condition is imposed: There are arbitrarily large sets that do not contain empty $7$-gons. Results of this type exist also for higher dimensions; see~\cite{MS00}.


\section{Reduction}
\label{sec:reduction}

We show that \textsc{Largest-Empty-Convex-Subset} is W[1]-hard by an \emph{fpt}-reduction from the W[1]-hard $k$-\textsc{Clique} problem \cite{FG06}: Given a
graph $G([n], E)$ and $k\in \mathbb{N}$, decide whether $G$ contains a clique of size $k$.

\begin{figure}
\centering
	\includegraphics[width=.47\textwidth]{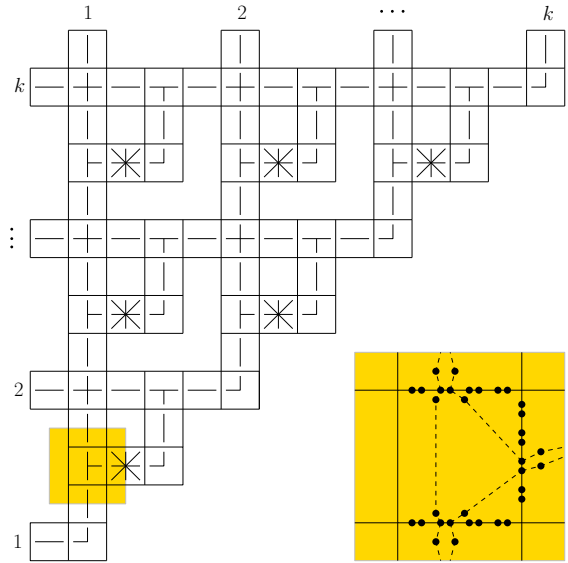}
	\caption{High level schematic of the construction. The zoomed-in area shows points shared among gadgets on their common boundaries and some pairs of points inside each gadget that take part in a choice of an empty convex set, (partially) marked with dashed segments. }
	\label{fig:high_level_construction}
\end{figure}

\subsection{High level description}

We begin with a high-level description of the construction, see Fig.~\ref{fig:high_level_construction}. Initially, the construction will lie on the plane; later on, it will be lifted to the elliptic paraboloid with a (more or less) standard transform. The construction is organized as the upper diagonal part of a grid with $k$ rows and $k$ columns. The $i$th row and column represent a choice for the $i$th vertex of a clique in $G$ and are made of $4i-2$ and $4(k-i+1)-2$ gadgets respectively. There are $n$ choices and each choice is represented by a collection of empty convex subsets of points -- one subset with a constant number of points from each gadget.

Each gadget consists of $\Theta(n)$ points within a rectangular region, which are organized in sets (of pairs) of collinear points. There is a constant number of such sets and, since we are looking for strictly convex subsets, only one pair of consecutive points per set can be chosen at any time. Certain choices are rendered invalid by additional points. Neighboring gadgets share the points on their common rectangle edge, see the zoomed-in area in Fig~\ref{fig:high_level_construction}. Through these common points, the choice of subsets is made consistent among the gadgets. In particular, the choice in the $i$th row is made consistent with the choice in the $i$th column via the `diagonal', $\boxangle$ gadget in their intersection corner; consistency here means that they both correspond to the same choice of a vertex of $G$. On the other hand, in the intersection of the $i$th column with the $j$th row, for every $j\geq i+1$, there is a `cross', $\boxplusnew$ gadget, which ensures that the choice in the column is propagated independently of the choice in the row and vice versa. 
Finally, the $j$th column is `connected' to the $i$th row, for every $j+1\leq i\leq k$, by three additional gadgets. One of them, the `star', $\boxasterisk$ gadget, encodes graph $G$, i.e., it allows only for combinations of choices (in the column and the row) that are consistent with the edges in $G$.

Locally, every valid subset from a gadget consists of points that are in strictly convex position and whose convex hull is empty. By lifting the whole construction to the paraboloid appropriately, we make sure that this property is true globally, i.e., for any set constructed from the local choices in a consistant manner. 

In total, the construction consists of a set $P$ of $\Theta(k^2n^2)$ points  in $\mathbb{R}^3$, such that there exists a set $Q\subset P$ with the desired property and $|Q| = f(k)$, for some function $f(k)\in \Theta(k^2)$, if and only if $G$ has a clique of size $k$.

\subsection{Gadgets}
\label{sec:gadgets}

There are five different types of gadgets, and each type has a specific function, which is explained below.

\medskip
\noindent
{$\boxminusnew$ \textbf {gadget.}} This gadget propagates a choice of pairs of points horizontally, see Fig.~\ref{fig:horizontal_gadget}. 
\begin{figure}[h]
\centering
	\includegraphics[width=.7\textwidth]{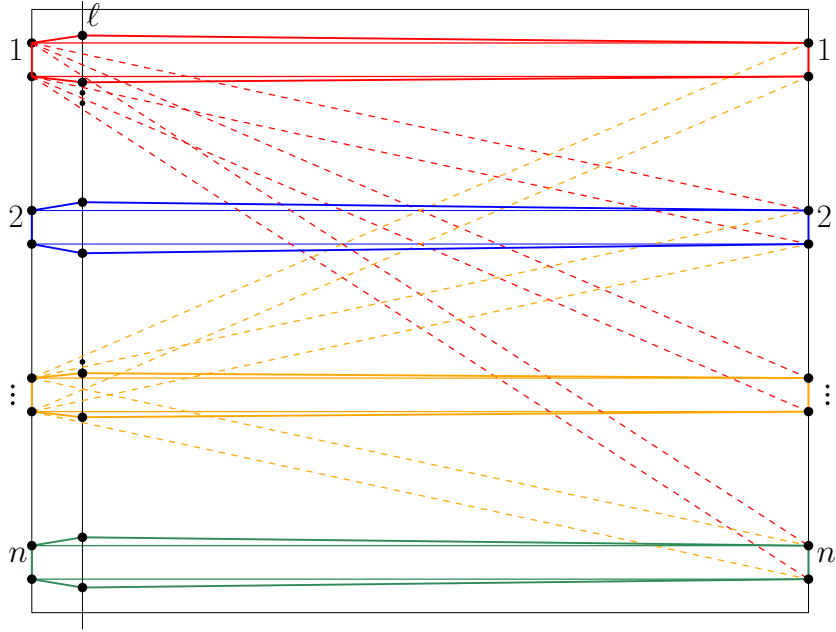}
	\caption{The $\boxminusnew$ gadget. Dashed parallelograms represent choices cancelled by points on $\ell$. Parallelograms in full are not cancelled. There are $n$ choices of convex empty $6$-gons.}
	\label{fig:horizontal_gadget}
\end{figure}
It has $n$ pairs of points $L_i$, $R_i$ on the left and right side of its rectangle respectively, which correspond to the vertices of $G$; pairs corresponding to the same vertex are aligned horizontally. It also has $n(n-1)+2$ points inside the rectangle on a vertical line $\ell$ as follows. For every two pairs $L_i$ and $R_j$, with $i\neq j$, a point is placed such that it is inside the parallelogram $L_iR_j$ formed by the pairs but outside the parallelogram formed by any other two pairs. Effectively, this point cancels a choice of pairs that correspond to different vertices of $G$. One point is placed above $L_1R_1$ on $\ell$ and close to its boundary; similarly one point is placed below $L_nR_n$.  Due to the strict convexity condition, at most one pair per rectangle side and at most one pair on $\ell$ can be chosen. Thus, there are $n$ maximum size empty convex subsets. Each subset contains six points and is formed by three pairs: $L_i$, $R_i$, for some $i$, and the pair of points on $\ell$ that are closest to the parallelogram $L_iR_i$; this latter pair is formed by the point that cancels $L_iR_{i-1}$ and the point that cancels $L_iR_{i+1}$. The $\boxvertnew$ gadget is just a $90^{\circ}$-rotated copy of the $\boxminusnew$ gadget.

\medskip
\noindent
{$\boxangle$ \textbf{gadget.}} This gadget has basically the same structure as the $\boxminusnew$ gadget but propagates information diagonally. For completeness, it is shown in Fig \ref{fig:diagonal_simple_gadget}.
\begin{figure}[h]
\centering
	\includegraphics[width=.7\textwidth]{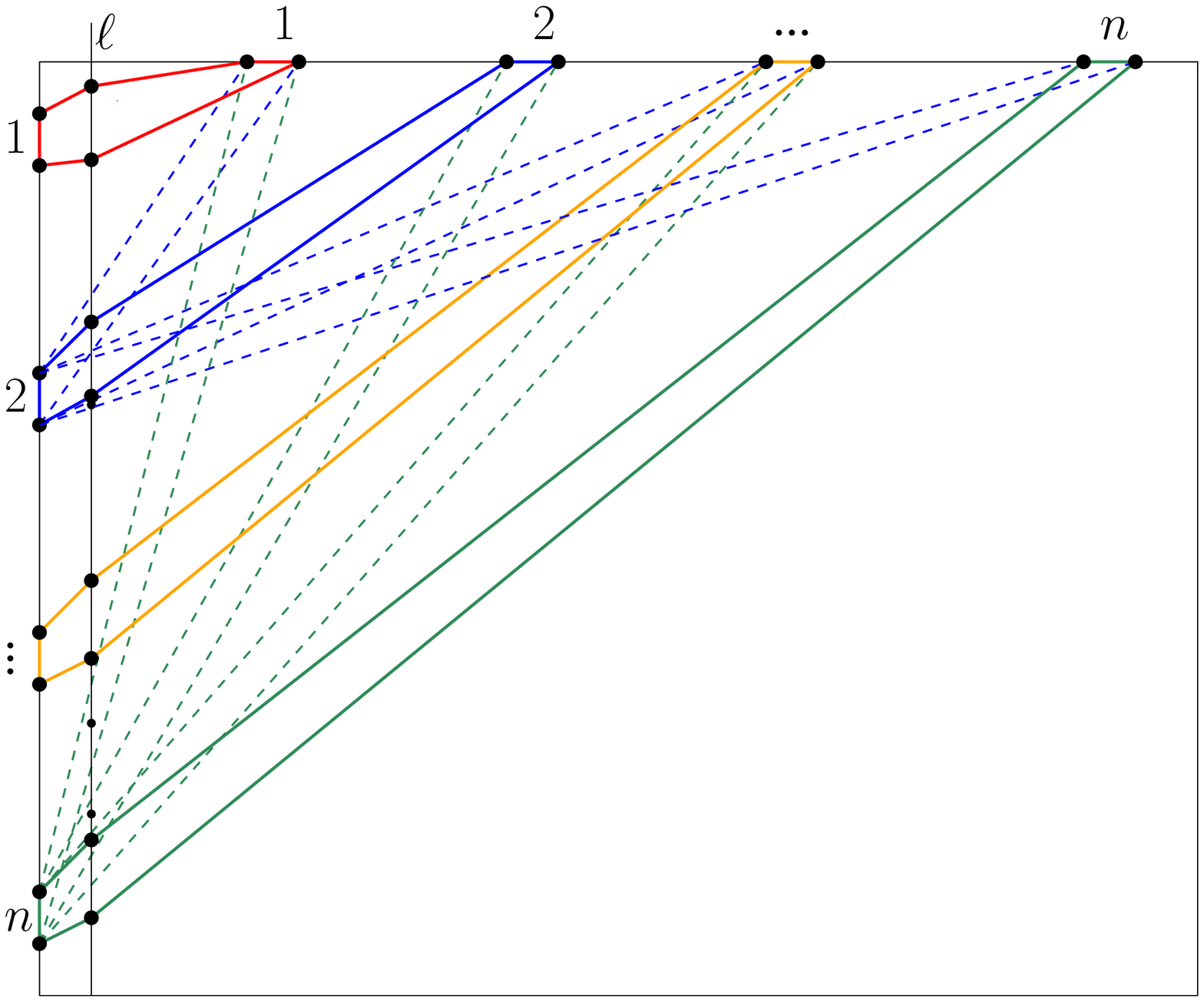}
	\caption[]{The $\boxangle$ gadget.}
	\label{fig:diagonal_simple_gadget}
\end{figure}

\medskip
\noindent
{$\boxtop$ \textbf{gadget.}} This gadget propagates information both horizontally and diagonally, see Fig.~\ref{fig:T_gadget}. It has $n$ pairs of points $L_i$, $R_i$, and $B_i$, on the left, right, and bottom side of its rectangle respectively. As before, an $i$th pair corresponds to the $i$th vertex of $G$. There will be only $n$ valid choices and three pairs per choice, namely, the $i$th pair from each side. This is enforced by the strict convexity condition and by placing, for every $i$, four additional points on two vertical lines $\ell$ and $\ell'$ inside the rectangle. These points are placed \emph{outside} the convex $6$-gon $L_iR_iB_i$ that is formed by the points in the corresponding pairs. (The points are also outside every other $6$-gon for $j\neq i$.) See the example for $i=2$ in Fig.~\ref{fig:T_gadget}. 

More specifically, looking at the gadget from top to bottom and from left to right, a point is placed on the intersection of $\ell$ and the line through the second point of $L_i$ and the second point of $B_{i-1}$; for the case of $i=1$, the point is placed just below the $6$-gon. A second point is placed on $\ell$ just above the $6$-gon. At the right side, two points are placed on $\ell'$ as follows. One point is placed on the intersection of $\ell'$ and the line through the second point of $L_i$ and the first point of $R_{i+1}$; for $i=1$, the point is placed just above the $6$-gon. A second point is placed on the intersection of $\ell'$ and the line through the second point of $R_i$ and the first point of $B_{i+1}$; for $i=n$, the point is placed just below the $6$-gon.

There are $n$ maximum size empty convex subsets with 10 points each. A subset is formed by the points in the pairs $L_i$, $R_i$, $B_i$, and the four points on $\ell$ and $\ell'$ that are closest to the corresponding $6$-gon. 

(Note that gadgets $\boxright$ and $\boxleft$ (used later on) are just rotated copies of gadget $\boxtop$.)
\begin{figure}
\centering
	\includegraphics[width=.7\textwidth]{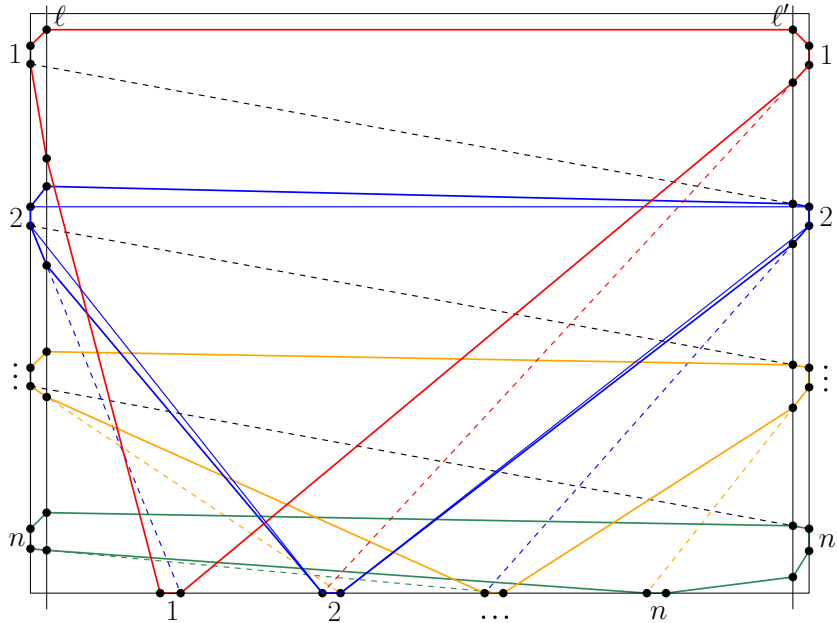}
	\caption{The $\boxtop$ gadget. There are $n$ choices of convex empty $10$-gons. An empty convex $6$-gon, as described in the text, is shown only for $i=2$.}
	\label{fig:T_gadget}
\end{figure}

\medskip
\noindent
{$\boxplusnew$ \textbf{gadget.}} This gadget consists of eight subgadgets, which are very similar to the ones we have already described above. See Fig.~\ref{fig:high_level_cross_gadget}.
\begin{figure}[h]
\centering
	\includegraphics[width=.5\textwidth]{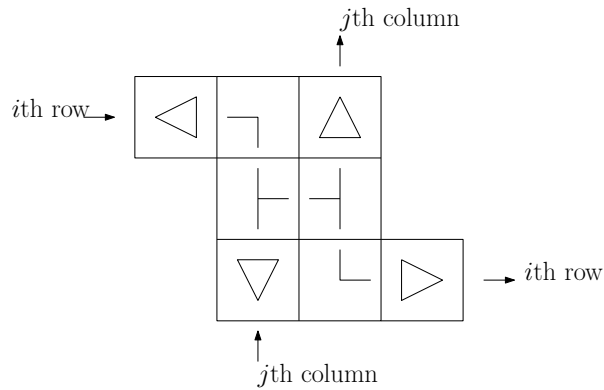}
	\caption{The $\boxplusnew$ gadget: a high level schematic.}
	\label{fig:high_level_cross_gadget}
\end{figure}
It can be thought of as having two inputs (at the upper and lower left corner) and two outputs (at the upper and lower right corner). It propagates the inputs (choices) independently from each other, one horizontally and one vertically. The input subgadgets $\boxtriangleleft$ and $\boxtriangledown$ are respectively connected (through their left and bottom rectangle sides) to the $\boxminusnew$ and $\boxvertnew$ gadgets of the $i$th row and $j$th column of the global construction (Fig.~\ref{fig:high_level_construction}). The output subgadgets $\boxtriangleright$ and $\boxtriangleup$ are similarly connected to a $\boxminusnew$ and $\boxvertnew$ gadget. Note that the subgadgets $\boxright$ and $\boxleft$ in the middle of the $\boxplusnew$ gadget (Fig.~\ref{fig:high_level_cross_gadget}) as well as the subgadgets $\boxanglerota$ and $\boxanglerotb$ are \emph{not} connected directly to any row or column of the global construction. Roughly speaking, the $\boxplusnew$ gadget has the following function: it multiplexes the two inputs, then it mirrors them (vertically and horizontally), and then demultiplexes them. 
Next, we describe the subgadgets in some more detail.

\begin{figure}[h]
\centering
	\includegraphics[width=.5\textwidth]{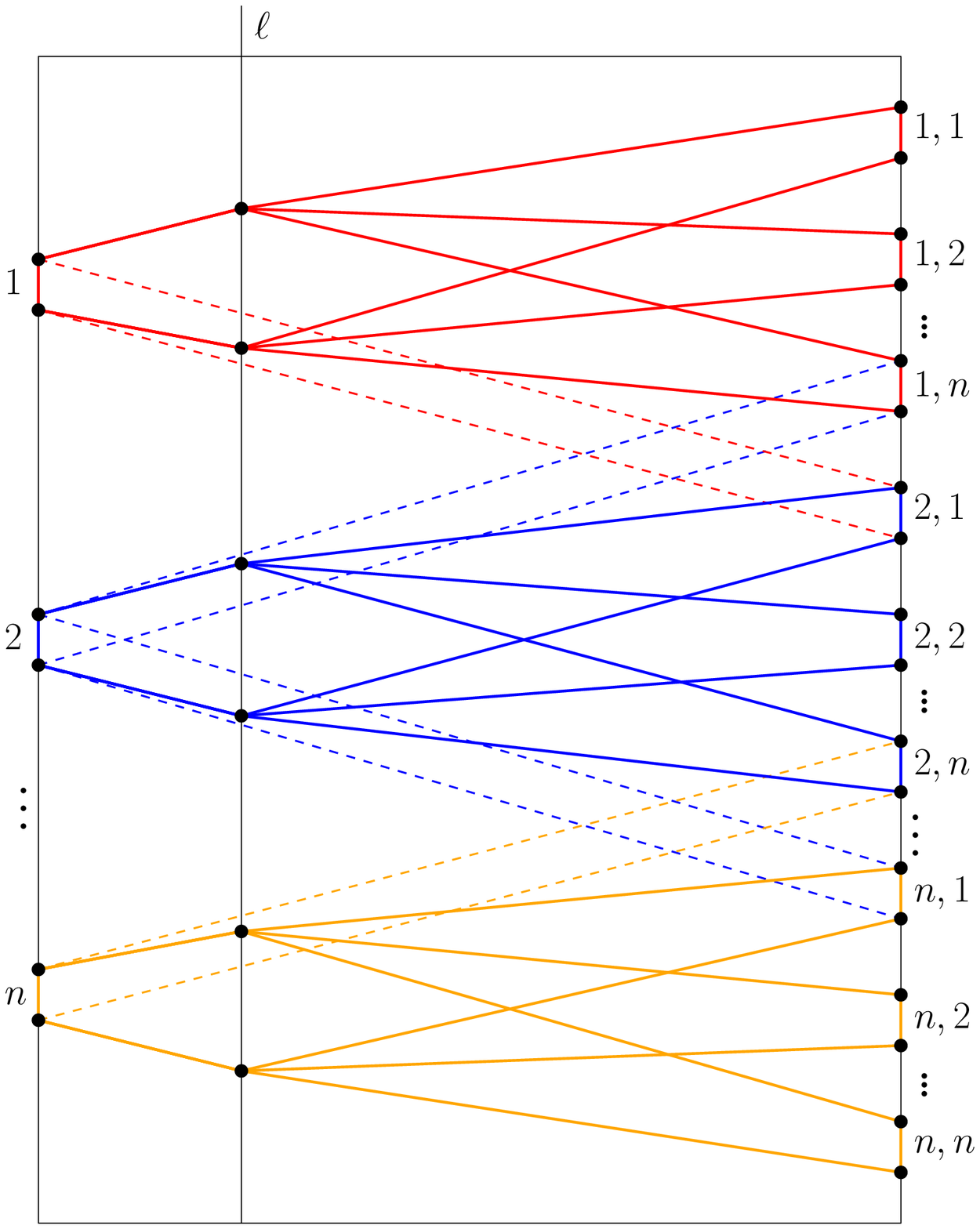}
	\caption[]{The $\boxtriangleleft$ subgadget.}
	\label{fig:horizontal_for_cross_gadget}
\end{figure}

The $\boxtriangleleft$ subgadget is shown in Fig.~\ref{fig:horizontal_for_cross_gadget}. It is similar to the previously described $\boxminusnew$ gadget in Fig.~\ref{fig:horizontal_gadget}. It has again $n$ pairs of points $L_i$ on the left side. The difference now is that there are $n^2$ pairs of points $R_{i,j}$, $1\leq i, j\leq n$, on the right side of the gadget. The second index $j$ basically encodes the choice coming from the input subgadget $\boxtriangledown$ at the lower left corner, which is communicated through the $\boxright$ and $\boxanglerota$ subgadgets inbetween. Only the $n^2$ pairs $L_i$ and $R_{i,j}$ constitute valid choices. The rest are cancelled by additional points as usually. Together with pairs of canceling points (which are also chosen as before) there are exactly $n^2$ empty convex $6$-gons, and these are of maximum size.

The input $\boxtriangledown$ subgadget propagates information vertically and is defined similarly to the $\boxtriangleleft$ subgadget. It has $n$ pairs of points $B_j$ on the bottom side and $n^2$ pairs $T_{ij}$ on the top side. The difference now is that only the $n^2$ pairs $B_j$ and $T_{i,j}$ are valid.


The output subgadgets $\boxtriangleright$ and $\boxtriangleup$ are just mirrored images of their input counterparts. The $\boxanglerota$ and $\boxanglerotb$ subgadgets are constructed in the same way as the $\boxangle$ gadget in Fig.~\ref{fig:diagonal_simple_gadget} but have $n^2$ valid $6$-gons, while the $\boxright$ and $\boxleft$ subgadgets are constructed in the same way as the $\boxtop$ gadget in Fig.~\ref{fig:T_gadget} and have $n^2$ valid $10$-gons. 

\medskip
\noindent
{$\boxasterisk$ \textbf{gadget.}} This gadget encodes the edges of the input graph $G$. See Fig.~\ref{fig:star_gadget}. It is similar to gadget $\boxminusnew$ (Fig.~\ref{fig:horizontal_gadget}) and allows only combinations of pairs that correspond to edges of the graph: for every \emph{non}-edge $ij$ of $G$, a point is placed inside the parallelogram $L_iR_j$. 

\begin{figure}[h]
\centering
	\includegraphics[width=.65\textwidth]{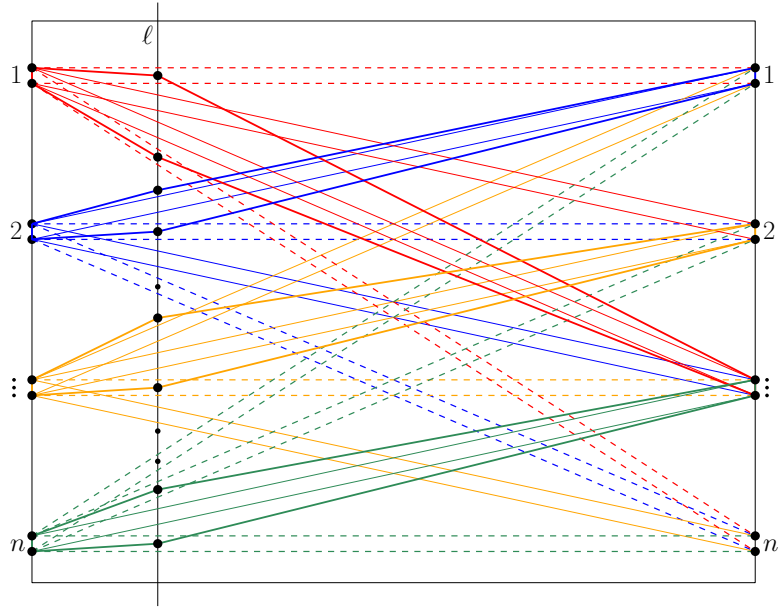}
	\caption{The $\boxasterisk$ gadget. Several examples of cancelled choices (in dashed) and of empty convex $6$-gons (in bold) are shown.}
	\label{fig:star_gadget}
\end{figure}

\subsection{Lifting to $\mathbb{R}^3$}
Every corner of a gadget rectangle is lifted to the paraboloid with the map $(x, y) \mapsto (x ,y , x^2+y^2)$. See Fig~\ref{fig:lifting}. The images of the corners of each rectangle lie on one distinct plane (since the corners lie on a circle). The points in a gadget are projected orthogonally on the corresponding plane. This is an affine map and thus colinearity and convexity within a gadget is preserved. Each gadget now lies on a distinct facet (a parallelogram) of a convex polyhedron.

\begin{figure}[h]
\centering
	\includegraphics[width=.5\textwidth]{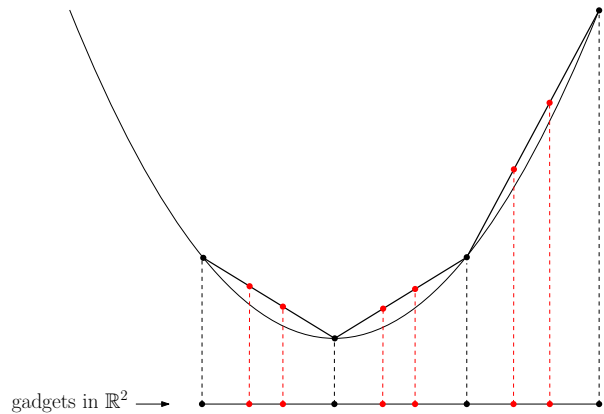}
	\caption{A $2$-dimensional analogue of the lifting procedure.}
	\label{fig:lifting}
\end{figure}

\subsection{Correctness}
The total number of (sub)gadgets of each type together with the size of a largest valid (i.e., empty and convex) subset in a gadget of the type is
\begin{itemize}[leftmargin=2.2cm]
\setlength{\itemsep}{-\parsep}
\item[$\boxminusnew$] : $k^2$, $6$;
\item[$\boxvertnew$] : $k(3k-1)/2$, $6$;
\item[$\boxangle$, $\boxanglerota$, $\boxanglerotb$] : $k(3k-1)/2$, $6$;
\item[$\boxtop$, $\boxright$, $\boxleft$] : $2k(k-1)$, $10$;
\item[$\boxtriangleleft$, $\boxtriangleright$, $\boxtriangleup$, $\boxtriangledown$] : $2k(k-1)$, $6$;
\item[$\boxasterisk$] : $k(k-1)/2$, $6$.
\end{itemize}

A global valid subset is formed by locally choosing one valid subset from every gadget in a consistent manner. When a largest locally possible subset (as given above) can be chosen, the global subset has size $k(35k-23)$. As we will now prove, such a global subset corresponds to a $k$-size clique of $G$. Let $P$ be the set of all the points in our construction.

\begin{lemma}
There exists an empty convex subset of $P$ with $k(35k-23)$ points if and only if $G$ has a clique of size $k$.
\end{lemma}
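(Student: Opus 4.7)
The plan is to prove both directions of the equivalence, leveraging the observation that $k(35k-23)$ equals the sum of the maximum sizes of valid local subsets over all gadgets once points shared on common rectangle edges are counted only once. Attaining this global size therefore forces each gadget to contribute a maximum-size local subset, and the shared points then propagate the underlying vertex choices consistently across the grid.

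For the forward direction, I would fix a $k$-clique $\{v_{i_1}, \dots, v_{i_k}\}$ of $G$ with $i_1 < \dots < i_k$. In every gadget of the $t$th row and $t$th column I would pick the maximum-size valid subset indexed by $i_t$, as described in Section~\ref{sec:gadgets}. The $\boxangle$ gadget at the $t$th diagonal corner admits a valid choice whose row- and column-sides both point to $v_{i_t}$; each $\boxplusnew$ gadget at the intersection of column $t'$ and row $t$ (with $t' < t$) propagates the two indices $i_{t'}$ and $i_t$ independently by its very design; and each $\boxasterisk$ gadget connecting column $t'$ to row $t$ admits the pair $(i_{t'}, i_t)$ precisely because $v_{i_{t'}} v_{i_t}$ is an edge of $G$. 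Summing the sizes listed immediately before the lemma and subtracting the duplicates on shared edges yields exactly $k(35k-23)$ points. Lifting to the paraboloid then certifies global validity: every gadget sits on its own facet of a convex polyhedron, so the convex hull of the chosen set meets each facet in the convex hull of the locally chosen subset, and both the emptiness and the strict-convexity conditions factor gadget-wise.

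For the converse, I would start from an empty strictly convex $Q \subset P$ with $|Q| = k(35k-23)$. By the same factoring argument, for every gadget the restriction of $Q$ to its facet is an empty strictly convex subset in the plane of that facet, and hence has cardinality at most the local maximum from Section~\ref{sec:gadgets}. Since these local maxima sum to exactly $k(35k-23)$ after accounting for shared points, equality must hold gadget by gadget: the restriction of $Q$ to every gadget is a maximum-size valid local subset, and is therefore determined by a single vertex index. The shared points on common rectangle edges force all $\boxminusnew$ gadgets of the $t$th row to refer to a common index $i_t$, and similarly all $\boxvertnew$ gadgets of the $t$th column to a common index $j_t$. The $\boxangle$ gadget at the $t$th diagonal forces $j_t = i_t$, while the $\boxplusnew$ gadgets carry both indices across each crossing unchanged. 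Finally, validity of the chosen $6$-gon in each $\boxasterisk$ gadget connecting column $t'$ to row $t$ means that $v_{i_{t'}} v_{i_t}$ is an edge of $G$, so $\{v_{i_1}, \dots, v_{i_k}\}$ is a $k$-clique.

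The main obstacle I anticipate is the tight bookkeeping: one has to verify that the per-gadget maxima tabulated just before the lemma are attained \emph{only} by the subsets indexed by a single vertex, that the overlap pattern on the shared rectangle edges is exactly the one implicit in the total $k(35k-23)$, and that no ``cross-gadget'' subset can beat the local bounds. The last point is precisely what the lifting construction buys us: because the gadgets lie on distinct facets of a convex polyhedron, emptiness and strict convexity decompose gadget-wise and no such cross-gadget gain is possible.
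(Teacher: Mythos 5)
Your proposal is correct and follows essentially the same route as the paper's own proof: the target size forces a maximum-size valid subset in every gadget, shared boundary points propagate a single vertex index along each row and column, the $\boxangle$ and $\boxplusnew$ gadgets handle the corners and crossings, and the $\boxasterisk$ gadgets enforce the edge constraints. You are merely more explicit about the two points the paper compresses, namely the size bookkeeping with shared points and the forward direction (which the paper dismisses as obvious).
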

\begin{proof}
Suppose there exists a global valid subset of size $k(35k-23)$. Then, a largest locally possible valid subset must be chosen from every gadget. Consider such a choice of subsets and let $v_i$ be the vertex of $G$ corresponding to the choice from the leftmost gadget of the $i$th grid row. By construction, the subset corresponding to the same vertex $v_i$ must be chosen from every other gadget in this row as well as every gadget in the $i$th column. Consider the $j$th row, for some $j\neq i$. Through the $\boxasterisk$ gadget that connects the $i$th column to the $j$th row, when $i+1\leq j$, or the $j$th column to the $ith$ row, when $j<i$, the subset chosen from the $j$th row must correspond to a vertex $v_j$ such that $v_iv_j$ is an edge of $G$. Hence $\{v_1,\ldots v_k\}$ is a clique in $G$. The converse is obvious.
\end{proof}

Therefore, we have shown the following

\begin{theorem}
\textsc{Largest-Empty-Convex-Subset} is W[1]-hard, under the condition of strict convexity.
\end{theorem}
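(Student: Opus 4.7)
My plan is to assemble the construction from Section~\ref{sec:reduction} into an \emph{fpt}-reduction from $k$-\textsc{Clique}---which is W[1]-hard with respect to $k$~\cite{FG06}---to \textsc{Largest-Empty-Convex-Subset} with the strict-convexity restriction. The bulk of the work has already been done: the Lemma gives the yes/no equivalence between a $k$-clique in $G$ and an empty convex subset of $P$ of size $k(35k-23)$. What remains is to verify the standard properties of an \emph{fpt}-reduction and to address strict convexity.

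For the complexity of the reduction, I would simply count: the construction contains $\Theta(k^2)$ gadgets, each holding $O(n^2)$ points whose coordinates are produced by a constant number of linear intersections together with a reading of the adjacency of $G$; the paraboloid lift $(x,y)\mapsto(x,y,x^2+y^2)$ is polynomial-time computable. So $|P|=\Theta(k^2n^2)$ and the whole instance (with rational coordinates of bit-size polynomial in $\log n$) can be output in time polynomial in $kn$. The new parameter $k' := k(35k-23)\in\Theta(k^2)$ is a computable function of $k$ alone, so the mapping $(G,k)\mapsto(P,k')$ is indeed an \emph{fpt}-reduction.

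The step I expect to require the most care is promoting the ``empty convex'' guarantee of the Lemma to ``empty \emph{strictly} convex'' globally. Locally, Section~\ref{sec:gadgets} was written so that every maximum-size valid choice inside a gadget is already in strictly convex position with empty convex hull. Globally, the lifting places each gadget on a distinct facet of a convex polyhedron inscribed in the paraboloid, so a union of per-gadget choices sits on the vertices of a convex polyhedron and is automatically in strictly convex position; moreover, any point of $P$ lying inside the global convex hull would have to lie in the in-facet convex hull of the points chosen from its own gadget, contradicting local emptiness. Combining this with the Lemma yields the claimed equivalence under strict convexity, and hence the theorem.
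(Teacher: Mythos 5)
Your proposal follows the paper's own route exactly: the paper derives the theorem directly from the gadget construction, the paraboloid lifting, and the Lemma equating a size-$k(35k-23)$ empty convex subset with a $k$-clique. You merely make explicit two points the paper leaves implicit (the polynomial-time/parameter-bound conditions of an \emph{fpt}-reduction, and the supporting-plane argument that per-facet strict convexity and emptiness transfer to the global set), and both of these details are correct.
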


\bibliographystyle{abbrv}

\end{document}